\theoremstyle{definition}
\newtheorem{definition}{Definition}[section]
\theoremstyle{theorem}
\newtheorem{lemma}[definition]{Lemma}
\newtheorem{theorem}[definition]{Theorem}
\numberwithin{equation}{section}
\title{A group-theoretic characterisation
of Taub-Nut spacetime}
\author{Schiden Yohannes and Domenico Giulini\\
Institute for Theoretical Physics\\
Leibniz University of Hannover\\
Appelstrasse 2, 30167 Hannover\\ 
Germany}
\begin{document}
\date{\today}
\maketitle
\begin{abstract} 
\noindent
We prove that any $G=SU(2)\times U(1)$ symmetric spacetime 
that is Ricci flat (i.e. solves the matter-free $\Lambda=0$ 
Einstein equations) with non-null $G$-orbits is locally 
isometric to some maximally extended generalised Taub-NUT spacetime. 
\end{abstract}
\section{Introduction}
This paper deals with Taub-NUT spacetime 
\cite{Taub:1951,Newman.Tamburino.Unti:1963} 
that is an exact solution to the matter-free 
(i.e. vacuum) equations of General Relativity 
without cosmological constant. This particular 
solution is known for its many surprising 
properties 
\cite{Misner:1963,Misner:1967,Misner.Taub:1969}.
Also, it can be generalised in various ways 
and embedded into the wider Pleba{\'n}ski–Demia{\'n}ski
class of solutions; see, e.g., 
\cite[chapter\,12,16]{Griffiths.Podolsky:2009}.
Here we are not interested in these generalisations 
and restrict attention to strict Taub-NUT only.  

We prove a result that gives rise to a new 
group-theoretic characterisation of Taub-NUT spacetime, 
or rather some obvious topological generalisation of it. 
The main theorem will be presented and proved in 
Section\,\ref{sec:MainTheorem}. It states that 
\emph{any} \(SU(2)\times U(1)\) symmetric vacuum solution 
to Einstein's equations with non-null group orbits is 
locally isometric to \emph{some} maximally extended  
generalised Taub-NUT geometry, where the ``generalisation'' 
here consists in replacing the 3-sphere in the global 
\(\mathbb{R}\times S^3\) topology with that of a lens 
space \(L(n,1)\). In that sense we can now say that 
(generalised) Taub-NUT can be characterised by its 
isometry group, together with a mild restriction on 
the group orbits. 

This result may be seen as an (incomplete) analogue 
to the so-called Jebsen-Birkhoff theorem, going back 
in idea to \cite{Jebsen:1921,Jebsen:2006} and 
(without much proof) \cite{Birkhoff:RMP1923}; see \cite{Johansen.Ravndal:2006} for more
on its history and multiple discovery. 
In a modern formulation it states that any 
spherically symmetric vacuum solution to Einstein's
equation is locally isometric to the maximally 
extended Schwarzschild-Kruskal manifold. Here
``spherical symmetry'' is defined by the 
existence of an isometric action of $SO(3)$ with 
spacelike \(S^2$ or $\mathbb{R}P^2\) orbits. 
A modern proof can be found in 
\cite[Chapter 4.10.1-2]{Straumann:GR2013}.

We call our result an \emph{incomplete} 
analogue to the Jebsen-Birkhoff theorem because
in the generalised Taub-NUT case we have several 
inequivalent (i.e. non globally isometric) maximal 
extensions, the precise classification of which 
we currently investigate.

There is another, different  notion of 
\emph{generalised Taub-NUT} introduced in 
\cite{Moncrief:1984}, which relaxes the global 
isometry to be merely \(U(1)\) (i.e. dropping 
the \(SU(2)\) factor altogether) and requires 
the spacetime to contain a compact Cauchy horizon 
diffeomorphic to \(S^3\) to which the \(U(1)\) 
action restricts to a free action with lightlike 
orbits. Hence the \(S^3\) Cauchy horizon is the 
total space of a \(U(1)\) principal bundle with 
base \(S^2\) (Hopf bundle) and lightlike fibres. 
The set of such ``generalised Taub-NUT'' 
spacetimes forms an infinite-dimensional 
proper submanifold within the set of all 
\(U(1)\)-symmetric vacuum spacetimes of ``roughly
half the dimensions'' \cite[p.\,108]{Moncrief:1984}.
In this case there are uncountably many 
inequivalent maximal extensions. 

In order to make our paper self contained, we will 
review the essential geometry and topology of 
Taub-NUT spacetime in Section\,\ref{ref:Taub-NUT},
also providing a characterisation of the 
NUT-charge as ``dual'' to mass. 
Our main theorem concerning the group-theoretic 
characterisation will be stated and proved in 
Section\,\ref{sec:MainTheorem}. We end with a 
brief outlook in Section\,\ref{sec:Outlook}.

\section{Taub-NUT space-time}
\label{ref:Taub-NUT}
In the following we will present some of the 
features of the Taub-NUT space-time with 
respect to the interpretation given by 
Misner~\cite{Misner:1963}. In this interpretation 
the topology of space-time is \(\mathbb{R} \times S^3\) 
and, using Euler coordinates, the metric is given by
\begin{align}
	\label{mismetric}
	g = -4l^2f(r)(d\psi+\cos\theta d\varphi)^2+\frac{1}{f(r)}dr^2+(r^2+l^2)(d\theta^2+\sin^2\theta d\varphi^2),
\end{align}
with
\begin{align}
	f(r) = \frac{r^2-2mr-l^2}{r^2+l^2}.
\end{align}
The constant \(m \in \mathbb{R}\) is interpreted as 
the mass and the constant 
\(l \in \mathbb{R}\setminus\{0\}\) is referred 
to as the NUT parameter. The four-dimensional 
isometry group of this space-time is 
\(SU(2)_L \times U(1)_R\) induced by the 
left-invariant vector field \(\xi_0 = \partial_\psi\) 
(generating right-translations) 
and the right-invariant vector fields
\begin{subequations}
	\begin{align}
		\xi_1 &= -\sin\varphi\partial_\theta- 				\cot\theta\cos\varphi\partial_\varphi+\csc\theta\cos\varphi\partial_\psi\\
		\xi_2 &= \cos\varphi\partial_\theta- 	\cot\theta\sin\varphi\partial_\varphi+\csc\theta\sin\varphi\partial_\psi\\
		\xi_3 &= \partial_\varphi.
	\end{align}
\end{subequations}
(generating left-translations) on the 3-sphere,
which we identify with the group manifold of 
\(SU(2)\). Note that the subscripts \(L\) and 
\(R\) on \(SU(2)\) and \(U(1)\), respectively,  
are meant to indicate that these groups act 
via left- and right-multiplication on  \(SU(2)\).
The vector fields \(\xi_0,\xi_1,\xi_2,\xi_3\) 
satisfy the commutation relations:
\begin{subequations}
	\begin{align}
		[\xi_i,\xi_j] &= -\varepsilon^k_{ij}\xi_k \\
		[\xi_0,\xi_i] &= 0 \quad \qquad i,j,k = 1,2,3.
	\end{align}
\end{subequations}
In terms of the left-invariant one-forms
\begin{subequations}
	\begin{align}
		\sigma_x &= \sin\psi d\theta-\sin\theta\cos\psi d\varphi\\
		\sigma_y &= \cos\psi d\theta+\sin\theta\sin\psi d\varphi\\
		\sigma_z &= d\psi+\cos\theta d\varphi
	\end{align}
\end{subequations}
the metric can be written as
\begin{align}
	g = -4l^2f(r)\sigma_z^2+\frac{1}{f(r)}dr^2+(r^2+l^2)(\sigma_x^2+\sigma_y^2).
\end{align} 
The orbit generated by \(\xi_1,\xi_2,\xi_3\) is three-dimensional, namely \(S^3\), with the orbits generated by \(\xi_0\) being subsets of it.\\

In the given coordinates the analytical expressions  
become singular at \(r_{\pm} = m \pm \sqrt{m^2+l^2}\),
whereas all components of the Riemann tensor in an
orthonormal tetrad, and hence in particular the Kretschmann
 scalar, are regular. This indicates that these singularities are, in fact, coordinate artefacts.
They correspond to the Killing horizons of the Killing vector field \(\partial_\psi\). A possible coordinate transformation removing these singularities is given by
\begin{align}
	\psi' = \psi + \int\frac{1}{2lf(r)}dr
\end{align}
such that the metric in these new coordinates is given by
\begin{align}
	\label{firstextension}
	g = -4l^2f(r)(d\psi'+\cos\theta d\varphi)^2+2(2l)(d\psi'+\cos\theta d\varphi)dr+(r^2+l^2)(d\theta^2+\sin^2\theta d\varphi^2).
\end{align}
Another coordinate transformation would be
\begin{align}
	\psi'' = \psi - \int\frac{1}{2lf(r)}dr,
\end{align}
giving
\begin{align}
	g = -4l^2f(r)(d\psi''+\cos\theta d\varphi)^2-2(2l)(d\psi''+\cos\theta d\varphi)dr+(r^2+l^2)(d\theta^2+\sin^2\theta d\varphi^2).
\end{align}
Written in terms of the left-invariant one-forms of \(S^3\), it is immediate that the metrics are regular on the whole manifold \(\mathbb{R} \times S^3\). Furthermore, it can be shown that both space-times are maximal \cite{Taub:1951}. In these coordinates both the stationary regions \(r<r_-\) and \(r>r_+ \, \), the so-called NUT-regions, and the region \(r_-<r<r_+\), called the Taub-region, are included. In particular, the hypersurfaces of \(r = const.\) are 3-spheres being spacelike in the Taub-region, timelike in the NUT-regions, and lightlike at \(r=r_\pm\).\\

Furthermore, with respect to the \(U(1)\) right 
multiplication the space-time can be considered to be a principal fibre bundle analogous to the Hopf bundle. 
Since the \(r = const.\) hypersurfaces are 3-spheres, 
there exist no equal-time hypersurfaces intersecting 
these 3-spheres in two-spheres along which we could 
evaluate the Komar integral for mass in the usual 
form . However, we can use the
structure of \(S^3\) as \(U(1)\) principle
fibre bundle over the base \(S^2\), which has a 
natural connection  given by the distribution of 
orthogonal complements to the fibre in each 
tangent space where the generating vector 
field of \(U(1)\) is non-null.  It is then  
possible to uniquely identify horizontal 
and right-invariant \(k\)-forms with \(k\)-forms on the base \(S^2\). Thus, considering the NUT regions, admitting the timelike Killing vector field \(\partial_\psi\) 
(generating the right-\(U(1)\) translation) 
the Komar mass of the space-time can be calculated. 
We will be using the orthonormal tetrad
\begin{subequations}
	\begin{align}
		\vartheta^0 &= 2lf^{1/2}(r)(d\psi+\cos\theta d\varphi)\\
		\vartheta^1 &= f^{-1/2}(r)dr\\
		\vartheta^2 &= (r^2+l^2)^{1/2} d\theta\\
		\vartheta^3 &= (r^2+l^2)^{1/2}\sin\theta d\varphi.	
	\end{align}
\end{subequations}
For \(\lim\limits_{r \to \infty} f(r) = 1\), we will calculate the Komar mass with respect to the Killing vector field \(k \coloneqq -\frac{1}{2l}\partial_\psi\) which is 
normalised at infinity \(r \to \infty\). The metric-dual one-form of the timelike Killing vector field is then given by
\begin{align}
	k^{\flat} = 2lf(r)(d\psi+\cos\theta d\varphi)
\end{align}
and hence
\begin{subequations}
	\begin{align}
		dk^{\flat} &= 2lf'(r)dr \wedge (d\psi+\cos\theta d\varphi) - 2lf(r) \sin\theta d\theta \wedge d\varphi\\
		&= -f'(r) \vartheta^0 \wedge \vartheta^1 -2l\frac{f(r)}{r^2+l^2} \vartheta^2 \wedge \vartheta^3\,,
	\end{align}
\end{subequations}
where we used the standard notation that 
denotes the one-form image of the vector 
\(k\) under the metric isomorphism by 
\(k^{\flat}=:g(k,\cdot)\). Now, picking  
the orientation defined by \(\omega = \vartheta^0 \wedge \vartheta^1 \wedge \vartheta^2 \wedge \vartheta^3\), we get
\begin{subequations}
	\begin{align}
		*dk^{\flat} &= f'(r) \vartheta^2 \wedge \vartheta^3 - 2l\frac{f(r)}{r^2+l^2} \vartheta^0 \wedge \vartheta^1\\
		&= f'(r)(r^2+l^2) \sin\theta d\theta \wedge d\varphi + 4l^2\frac{f(r)}{r^2+l^2} dr \wedge (d\psi+\cos\theta d\varphi).
	\end{align}
\end{subequations}
Then with respect to an arbitrary hypersurface \(r_0 = const.\), the two-form is given by
\begin{align}
	*dk^{\flat} &= -f'(r_0)(r_0^2+l^2) d\sigma_z.
\end{align}
Since \(\frac{i}{2}\sigma_z\) is a connection one-form for the principal fibre bundle, \(*dk^{\flat}\) can be considered as a multiple of the curvature form and hence is horizontal and right-invariant, as well as closed. Thus, we can identify it with a closed two-form on the base space \(S^2\), such that using the formula for the Komar mass, we have
\begin{align}
	-\frac{1}{8\pi}\int_{S^2_{\infty}}*dk^{\flat} = \lim_{r\to\infty}-\frac{1}{8\pi}f'(r)(r^2+l^2)\int_{S^2}\sin\theta d\theta \wedge d\varphi = -m.
\end{align}
Therefore \(m\) can be interpreted as the Komar mass of the space-time. Moreover, considering \(dk^{\flat}\) instead of \(*dk^{\flat}\), the same line of argument can be applied to give
\begin{align}
	-\frac{1}{8\pi}\int_{S^2_{\infty}}dk^{\flat} =  \lim_{r\to\infty}\frac{1}{8\pi}2lf(r)\int_{S^2}\sin\theta d\theta \wedge d\varphi = l.
\end{align}
So the constants \(m\) and \(l\) are 
related by Hodge duality.\\

Duality also arises in the description of the dual-Bondi-mass of space-times which are asymptotically empty and flat at null infinity and with vanishing Bondi news. It can be shown that in this case null infinity, for space-times having a non-vanishing dual-Bondi-mass, is topologically a Lens space \(L(n,1)\) and a principal fibre bundle \((L(n,1),\pi,S^2;S^1)\), with the dual-Bondi-mass being proportional to the number of twists, \(n\), in the bundle. Conversely, if null infinity is a non-trivial \(S^1\) principal fibre bundle over \(S^2\), the news tensor field vanishes and there exists an infinitesimal translation such that the dual-Bondi-mass with respect to it is non-zero. In particular, the Taub-NUT space-time can be shown to be asymptotically empty and flat at null infinity, with null infinity being a 3-sphere. The dual-Bondi-mass with respect to the infinitesimal translation induced by the Killing vector field \(-\frac{1}{2l}\partial_\psi\) can be computed to be the NUT parameter \(l\)~\cite{Ramaswamy.Sen:1981}.

\section{Main theorem}
\label{sec:MainTheorem}
In this section we intend to give a unique characterisation of the Taub-NUT space-time in terms of the isometry group and its orbits. In particular, the Taub-NUT space-time can be seen to be the universal cover of a family of space-times admitting \(SU(2) \times U(1)\) as an isometry group such that the group orbits of \(SU(2)\times U(1)\) and \(SU(2)\) are three-dimensional and non-null.\\

Since the metric of the Taub-NUT space-time induces a \(SU(2)_L \times U(1)_R\) invariant metric on the hypersurfaces \(r = const.\), being diffeomorphic to \(SU(2)\), we will begin by studying special metrics on \(SU(2)\). For Lorentz metrics on \(SU(2)\) we have

\begin{lemma}\label{u1timelike}
	Let \(G = SU(2)\) and \(g\) a Lorentz metric on \(G\) such that it is \(SU(2)\) left-invariant and \(U(1)\) right-invariant, whereby \(U(1)_R\) is considered as a subgroup of the \(SU(2)_R\). Then the orbits of the \(U(1)\) right-multiplication are timelike curves.
\end{lemma}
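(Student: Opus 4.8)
The plan is to reduce everything to linear algebra on the Lie algebra $\mathfrak{su}(2) = T_e G$ by exploiting left-invariance. A left-invariant metric is completely determined by its value $Q := g_e$ at the identity, a symmetric bilinear form on $\mathfrak{su}(2)$. First I would fix a basis $\{X_1, X_2, X_3\}$ adapted to the $U(1)_R$ factor: choose $X_3$ to be the generator of the one-parameter subgroup $U(1) \subset SU(2)$ whose right-multiplication we are studying, and let $X_1, X_2$ span the $\mathrm{ad}_{X_3}$-invariant complement, so that $\mathrm{ad}_{X_3}$ acts as an infinitesimal rotation of the $X_1$–$X_2$ plane and annihilates $X_3$.

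The next step is to encode the $U(1)_R$ right-invariance hypothesis. Left-invariance is automatic once we work with $Q$; for the additional right-invariance I would use the standard fact that, for a left-invariant metric, invariance under right-multiplication by $\exp(tX_3)$ is equivalent to $\mathrm{Ad}_{\exp(tX_3)}$-invariance of $Q$, i.e. to the infinitesimal condition $Q([X_3, X], Y) + Q(X, [X_3, Y]) = 0$ for all $X, Y \in \mathfrak{su}(2)$. Feeding in the bracket relations, this forces $Q$ to be block-diagonal and rotationally symmetric in the $X_1$–$X_2$ plane: writing $Q = (Q_{ij})$ in the chosen basis, one gets $Q_{13} = Q_{23} = 0$, $Q_{12} = 0$ and $Q_{11} = Q_{22} =: a$, while $Q_{33} =: c$ is left unconstrained, so $Q = \mathrm{diag}(a, a, c)$.

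With the form diagonalised, the conclusion follows from a signature count. Since $g$ is Lorentzian, $Q$ has exactly one negative and two positive eigenvalues. Two of its eigenvalues coincide (both equal $a$), so the single negative eigenvalue cannot belong to this degenerate pair — otherwise there would be at least two negative eigenvalues, contradicting Lorentzian signature. Hence $a > 0$ and the negative eigenvalue is $c = Q_{33} < 0$. Finally I would identify the $U(1)_R$ orbit through a point $\gamma$ as the curve $t \mapsto \gamma\exp(tX_3)$, whose velocity field is precisely the left-invariant extension of $X_3$; by left-invariance its squared norm is $Q(X_3, X_3) = c < 0$ everywhere, so the orbits are timelike.

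The only place requiring genuine care — the main obstacle — is the signature step: one must observe that the degeneracy $Q_{11} = Q_{22}$ forced by the $\mathrm{Ad}(U(1))$-symmetry is exactly what pins the timelike direction onto the $U(1)$ generator $X_3$ rather than into the $X_1$–$X_2$ plane. Everything else is a routine translation of group invariance into the adjoint action together with a short computation using the $\mathfrak{su}(2)$ structure constants.
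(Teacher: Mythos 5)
Your proof is correct, and it rests on the same pivot as the paper's: both arguments reduce the two-sided invariance to \(\mathrm{Ad}(U(1))\)-invariance of the bilinear form \(Q=g_e\) on \(\mathfrak{su}(2)\). Where you diverge is in how the timelike direction gets pinned to the \(U(1)_R\) generator. The paper argues abstractly that a \(U(1)\) subgroup of the three-dimensional Lorentz group is a spatial rotation fixing a timelike line, takes a timelike vector \(v\) in that line, extends it to a left-invariant field \(X\), checks that \(X\) is also \(U(1)_R\)-invariant, and then needs a final step: \(X\) generates a circle subgroup \(U(1)_R'\subset SU(2)_R\) commuting with \(U(1)_R\), and since commuting circle subgroups of \(SU(2)\) coincide, the \(U(1)_R\)-orbits are the integral curves of \(X\) and hence timelike. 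Your route short-circuits both the classification of \(U(1)\subset O(1,2)\) and this commutation argument: by adapting the basis to the given generator \(X_3\) from the outset and writing out the infinitesimal invariance condition with the \(\mathfrak{su}(2)\) structure constants, you obtain \(Q=\mathrm{diag}(a,a,c)\) directly, and the eigenvalue-multiplicity observation (a repeated eigenvalue cannot be the unique negative one) forces \(c=Q(X_3,X_3)<0\). This is arguably cleaner, since the identification of the invariant timelike line with the span of \(X_3\) falls out of the diagonalisation rather than requiring the separate self-centralising-torus fact; the price is that it is tied to an explicit computation with structure constants, whereas the paper's version makes the underlying representation-theoretic mechanism (the isotropy representation of the diagonal \(U(1)\)) more visible and reusable.
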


\begin{proof}
	The left-action\footnote{The reader 
should be aware of the conceptual difference between left/right-multiplication on groups
and left/right-action of groups on sets: 
A \emph{left-action} of a group $G$ on a 
set $S$ is simply a homomorphism 
$\Phi:G\rightarrow\mathrm{Bij}(S)$, 
from the group $G$ into the group of 
bijections of $S$, with group multiplication 
of the latter just being composition of maps. 
This means that $\Phi:g\mapsto\Phi_g$
is such that $\Phi_g\circ\Phi_h=\Phi_{gh}$.
In contrast, a \emph{right-action} is an 
anti-homomorphism 
$\tilde\Phi:G\rightarrow\mathrm{Bij}(S)$
which satisfies $\tilde\Phi_g\circ\tilde\Phi_h=\tilde\Phi_{hg}$.
A left-action can be turned into a 
right-action (and vice versa) if we compose it 
with the group inversion $I:G\rightarrow G$, 
$g\mapsto I(g):=g^{-1}$ which is an 
anti-homomorphims, i.e. $I(gh)=I(h)I(g)$.
Then $\tilde\Phi:=\Phi\circ I$ is a 
right-action if $\Phi$ is a left-action. Now,     
If $S=G$, there are two natural actions 
of $G$ on itself, called $L$ and $R$ and 
given by left- and right-multiplication
respectively: $L_g(p):=gp$ and $R_g(p):=pg$.
Associativity of group multiplication implies
that these two actions commute (as maps):
$L_g\circ R_h=R_h\circ L_g$ for all 
$g,h\in G$. Written in this way $L$ is a 
left- and $R$ is a right-action which 
as such do not combine to any action, left or 
right. However, the 
right-multiplication can be turned into a 
left-action by composing $R$ with $I$. 
In this way we get two different and 
commuting left-actions of $G$ on itself, 
one by left-multiplication with $g\in G$ and one by right-multiplication with $g^{-1}$. 
Together they define a left-action of 
$G\times G$ on $G$, given by 
$\Phi_{(g,h)}:=L_g\circ R_{h^{-1}}$,
that is $\Phi_{(g,h)}(p)=gph^{-1}$. In order to 
distinguish the group $G$ that acts by 
left-multiplication from the one that acts 
by right-multiplication (with the inverse) we 
distinguish them notationally and call them $G_L$ and $G_R$, respectively. Restricting 
this to the diagonal subgroup 
$G_\Delta:=\{(g,g):G\in G\}\subset G_L\times G_R$ 
gives the left-action of $G$ on itself that 
is usually referred to as ``conjugation''.} \(SU(2)_L\times U(1)_R\subset S(U)_L\times SU(2)_R\) is simply obtained by restricting the standard 
left-action of 
\(S(U)_L\times SU(2)_R\) on $SU(2)$: 
	\begin{subequations}
		\begin{align}
			&(SU(2)_L \times U(1)_R) \times SU(2) \to SU(2)\\
			&((h,h'),g) \mapsto hgh'^{-1}.
		\end{align}
	\end{subequations}
	Now let \(e \in G\) be the identity, then we have for \(h \in SU(2)_L\) and \(h' \in U(1)_R\)
	\begin{align}
		((h,h'),e) \mapsto heh'^{-1} = hh'^{-1}.
	\end{align}
	Hence the isotropy group at \(e\) is the diagonal \(U(1)\) subgroup in \(SU(2)_L \times U(1)_R\), 
denoted by  \(C_h\). Then
	\begin{align}
		(dC_h)_e : T_eG \to T_eG
	\end{align}
	induces the adjoint representation
	\begin{subequations}
		\begin{align}
			Ad: U(1) \to GL(T_eG)\\
			Ad(h) = (dC_h)_e.
		\end{align}
	\end{subequations}
	Since the tangent space is three-dimensional, the action induced by this \(U(1)\) on the tangent space is given by a \(U(1)\) subgroup of the three-dimensional Lorentz group. Furthermore, because the \(U(1)\) subgroups of the three-dimensional Lorentz group consist of rotations acting by orthogonal transformations in a spacelike plane, such that the corresponding orthogonal timelike direction is invariant, the three-dimensional tangent space decomposes into an orthogonal sum of a two-dimensional spacelike subspace and a one-dimensional timelike subspace. Choosing any normed vector \(v\) in the one-dimensional timelike subspace, we define the left-invariant vector field \(X \in Lie(G)\) by
	\begin{align}
		X(g) = \left. \dv{}{t}\right|_{\scriptscriptstyle t=0}(g \, exp(tv)).
	\end{align}
	The right action on this left-invariant vector field is determined by the adjoint of \(v \in T_eG\) with respect to \(h^{-1} \in U(1)\),
	\begin{subequations}
		\begin{align}
			(dR_h)_g(X_g) &= \left. \dv{}{t}\right|_{\scriptscriptstyle t=0}(g \, exp(tv) \, h)\\
			&= \left. \dv{}{t}\right|_{\scriptscriptstyle t=0}(gh \, C_{h^{-1}}(exp(tv)))\\
			&= \left. \dv{}{t}\right|_{\scriptscriptstyle t=0}(gh \,exp(tAd(h^{-1})(v))).
		\end{align}
	\end{subequations}
	Since the timelike direction is invariant with respect to the adjoint representation, we obtain
	\begin{subequations}
		\begin{align}
			(dR_h)_g(X_g)&= \left. \dv{}{t}\right|_{\scriptscriptstyle t=0}(gh \,exp(tAd(h^{-1})(v)))\\
			&= \left. \dv{}{t}\right|_{\scriptscriptstyle t=0}(gh \, exp(tv))\\
			&= X(gh).
		\end{align}
	\end{subequations}
	Therefore the left-invariant vector field is also 
\(U(1)_R\)--invariant. Furthermore, because it is a left-invariant vector field, it generates a \(U(1)_R\)-action, considered as \(U(1)_R' \subset SU(2)_R\). By being also invariant under \(U(1)_R \subset SU(2)_R\), the two \(U(1)\) right actions have to commute, hence \(U(1)_R = U(1)_R'\). Therefore, the orbits of the \(U(1)\) right action coincide with the orbits of \(X\) and thus are timelike.
\end{proof}

Next we will prove that a \(SU(2)_L \times U(1)_R\)--invariant metric on \(SU(2)\) can be put into a canonical form:

\begin{lemma}\label{canonmetric}
	Let \(G = SU(2)\) and \(g\) a non-degenerate, symmetric bilinear form on \(G\) which is \(SU(2)_L \times U(1)_R\)--invariant. Then \(g\) can be written as
	\begin{align}
		g = A \sigma_z^2+B(\sigma_x^2+\sigma_y^2),
	\end{align}
	where \(\sigma_x,\sigma_y,\sigma_z\) are left--invariant one-forms on \(G\).
\end{lemma}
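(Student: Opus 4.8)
The plan is to use left-invariance first to reduce the whole statement to a purely algebraic one about the Lie algebra $\mathfrak{g}=T_eG$, and then to impose the remaining $U(1)_R$-invariance as an $\mathrm{Ad}(U(1))$-invariance of a bilinear form on $\mathfrak{g}$. Since $g$ is $SU(2)_L$-invariant, it is a left-invariant $(0,2)$-tensor field and is therefore completely determined by its value $b:=g_e$ at the identity, a symmetric bilinear form on $\mathfrak{g}$. Writing $b=b_{ij}\,\sigma^i\otimes\sigma^j$ in the basis of left-invariant one-forms $\sigma^1=\sigma_x,\ \sigma^2=\sigma_y,\ \sigma^3=\sigma_z$, the claim reduces to showing that $(b_{ij})$ is diagonal with $b_{xx}=b_{yy}$.

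Next I would translate $U(1)_R$-invariance into an algebraic condition. For a left-invariant tensor, pulling back along the right-translation $R_h$ and evaluating at $e$ reproduces $b$ precomposed with $\mathrm{Ad}(h^{-1})$; hence $U(1)_R$-invariance of $g$ is equivalent to $b(\mathrm{Ad}(h)X,\mathrm{Ad}(h)Y)=b(X,Y)$ for all $h\in U(1)$ and all $X,Y\in\mathfrak{g}$. From the proof of Lemma~\ref{u1timelike} the action of $\mathrm{Ad}(U(1))$ is already known: it fixes the one-dimensional subspace generated by the $U(1)_R$-generator $\xi_0$ (which is dual to $\sigma_z$) and acts as the rotation group $SO(2)$ on the complementary two-dimensional subspace spanned by the duals of $\sigma_x,\sigma_y$.

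The core of the argument is then to determine the $SO(2)$-invariant symmetric bilinear forms, which I would organise block by block with respect to the splitting $\mathfrak{g}=\mathbb{R}\langle e_z\rangle\oplus\mathrm{span}\{e_x,e_y\}$. The component $b_{zz}$ is automatically invariant and supplies the free coefficient $A$. The two mixed components $(b_{xz},b_{yz})$ assemble into a vector of the rotation plane that must be fixed by every rotation, and so it vanishes. Finally, the $2\times 2$ block $(b_{ab})_{a,b\in\{x,y\}}$ must commute with all rotation matrices; differentiating the invariance condition at the identity shows that it commutes with the infinitesimal generator $J$ of $SO(2)$, and a symmetric matrix commuting with $J$ is forced to be a scalar multiple $B$ of the identity. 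This gives $g=A\,\sigma_z^2+B(\sigma_x^2+\sigma_y^2)$, and non-degeneracy of $g$ guarantees $A\neq 0$ and $B\neq 0$.

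The step that carries the real content is the last one, namely the representation-theoretic fact that the only $SO(2)$-invariant symmetric form on the two-dimensional rotation module is a multiple of the standard inner product (equivalently, that this module has neither invariant vectors nor a second independent invariant symmetric form). Everything else is bookkeeping once the $\mathrm{Ad}(U(1))$-action has been identified, and that identification is essentially already available from Lemma~\ref{u1timelike}; so the only genuine obstacle is the elementary invariant-theory computation for the standard $SO(2)$-representation.
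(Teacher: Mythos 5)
Your proposal is correct, but it takes a genuinely different route from the paper. The paper works in a basis adapted to the \emph{metric}: it sets $e_1=Z$ (the $U(1)_R$ generator), completes with $e_2,e_3$ that are $g$-orthogonal to $Z$, extracts constraints on the structure constants from $L_{e_1}g=0$ via Cartan's magic formula and the Maurer--Cartan equations, then invokes unimodularity of $SU(2)$ and the self-adjointness of the endomorphism $F(e_i)=[e_j,e_k]$ to diagonalise the bracket into a Milnor-type frame with standard commutation relations, and only at the end imposes the residual invariance to force $B=C$, $D=0$. You instead fix the basis adapted to the \emph{Lie algebra} from the start (with $e_z$ along the $U(1)_R$ generator), reduce the whole statement to $\mathrm{Ad}(U(1))$-invariance of the bilinear form $b=g_e$ on $\mathfrak{su}(2)$, and finish by elementary $SO(2)$ invariant theory on the isotypic decomposition $\mathbb{R}e_z\oplus\mathrm{span}\{e_x,e_y\}$; this bypasses the unimodularity argument and the diagonalisation of $F$ entirely and is the shorter and more transparent argument, while the paper's Bianchi-classification style has the advantage of not presupposing a frame with standard commutation relations and of generalising to other homogeneous models. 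One small correction: you should not source the structure of the $\mathrm{Ad}(U(1))$-action from Lemma~\ref{u1timelike}, whose proof derives the ``rotation in a plane'' picture \emph{from} the preservation of a Lorentzian metric, whereas the present lemma allows arbitrary signature; the fact you need (that $\mathrm{Ad}(\exp(te_z))$ fixes $e_z$ and acts as $\exp(tJ)$ on $\mathrm{span}\{e_x,e_y\}$) follows directly and metric-independently from the commutation relations $[e_z,e_x]=e_y$, $[e_z,e_y]=-e_x$, so the fix is immediate.
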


\begin{proof}
	Let \(Z\) be a fundamental vector field associated to the \(U(1)_R\)-action and any element \(ix \in Lie(U(1)) =i\mathbb{R}\),
	\begin{align}
		Z(g) = \left. \dv{}{t}\right|_{\scriptscriptstyle t=0}(g \, exp(tix)), \quad g \in G.
	\end{align}
	Then the vector field \(Z\) is left-invariant. We will complete it to a basis for \(Lie(G)\) by choosing two linearly independent left-invariant vector fields in the orthogonal complement of \(Z\), so \(X,Y \in Lie(G)\) such that \(X,Y \bot Z\). Then, denoting the basis as \(e_1 = Z, e_2 = X, e_3 = Y\) and their dual one forms by \(\omega_1,\omega_2,\omega_3\), \(g\) can be written as
	\begin{align}
		g = \lambda (\omega^1)^2+ \mu (\omega^2)^2+ \nu (\omega^3)^2+ \kappa \omega^2\omega^3.
	\end{align}
	Now since \(g\) is \(U(1)_R\)--invariant, we have
	\begin{align}
		L_{e_1}g = 0.
	\end{align}
	If the structure constants are given by
	\begin{align}
		[e_i,e_j] = c^k_{ij} e_k
	\end{align}
	we have for their dual one-forms
	\begin{align}
		d\omega^k = - \sum_{i<j} c^k_{ij} \omega^i \wedge \omega ^j.
	\end{align}
	Thus, using Cartan's magic formula, we obtain
	\begin{subequations}
		\begin{align}
			L_{e_1}\omega^1 &= i_{e_1}d\sigma^1 + d(i_{e_1}\sigma^1)\\
			&= -i_{e_1}\left(\sum_{i<j} c^1_{ij} \omega^i \wedge \omega ^j\right)\\
			&= -\sum_{i<j} c^1_{ij} \left( (i_{e_1}\omega^i) \wedge \omega^j - \omega^i \wedge (i_{e_1}\omega^j) \right)\\
			&= -\left(c^1_{12} \, \omega^2 + c^1_{13} \, \omega^3\right)
		\end{align}
	\end{subequations}
	and similarly
	\begin{subequations}
		\begin{align}
			L_{e_1}\omega^2 &= -\left(c^2_{12} \, \omega^2 + c^2_{13} \, \omega^3\right)\\
			L_{e_1}\omega^3 &= -\left(c^3_{12} \, \omega^2 + c^3_{13} \, \omega^3\right).
		\end{align}
	\end{subequations}
	Because
	\begin{align}
		L_V(T \otimes S) = (L_VT) \otimes S + T \otimes (L_VS),
	\end{align}
	for any vector field \(V\) and tensor fields \(T,S\), we see that
	\begin{align}
		L_{e_1}g = 0 \implies c^1_{12} = c^1_{13} = 0,
	\end{align}
	by noting that terms of the form \(\omega^1\omega^2\) and \(\omega^1\omega^3\) can only be obtained by \(L_{e_1}\omega^1\). Thus we have
	\begin{subequations}
		\begin{align}
			&[e_1,e_2] \in span\{e_2,e_3\}\\
			&[e_1,e_3] \in span\{e_2,e_3\}.
		\end{align}
	\end{subequations}
	Defining a vector space endomorphism \(F:Lie(SU(2)) \to Lie(SU(2))\) by
	\begin{align}
		F(e_1) = [e_2,e_3] \quad F(e_2) = [e_3,e_1] \quad F(e_3) = [e_1,e_2],
	\end{align}
	the matrix of \(F\) with respect to \(\{e_1,e_2,e_3\}\) is given by
	\begin{align}
		F = \begin{pmatrix*}
			c^1_{23} & 0 & 0\\
			c^2_{23} & c^2_{31} & c^2_{12}\\
			c^3_{23} & c^3_{31} & c^3_{12}
		\end{pmatrix*}.
	\end{align}
	Since \(SU(2)\) is a unimodular group, we have: \(tr \, ad(x) = 0 \quad \forall x \in Lie(SU(2))\), therefore implying \(c^2_{23} = c^3_{23} = 0\). Since we also have \(c^2_{12} = c^3_{31}\), \(F\) is self-adjoint with respect to \(g\) and hence we can find an orthogonal transformation diagonalizing \(F\). It is an orthogonal transformation keeping \(e_1\) fixed and transforming in its orthogonal complement. Thus, by the appropriate transformation, we get a new basis \(\{\tilde{e}_1,\tilde{e}_2,\tilde{e}_3\}\) with \(\tilde{e}_1 = e_1\) satisfying
	\begin{subequations}
		\begin{align}
			&[\tilde{e}_1,\tilde{e}_2] = \tilde{c}^3_{12} \tilde{e}_3\\
			&[\tilde{e}_3,\tilde{e}_1] = \tilde{c}^2_{31} \tilde{e}_2\\
			&[\tilde{e}_2,\tilde{e}_3] = \tilde{c}^1_{23} \tilde{e}_1
		\end{align}
	\end{subequations}
	and by rescaling, we obtain the basis \(\{e_1',e_2',e_3'\}\) with the commutation relations
	\begin{subequations}
		\begin{align}
			&[e_1',e_2'] = e_3'\\
			&[e_3',e_1'] = e_2'\\
			&[e_2',e_3'] = e_1'.
		\end{align}
	\end{subequations}
	Now, denoting the dual one-forms of this basis by \(\{\sigma_z,\sigma_x,\sigma_y\}\) respectively, \(g\) can be written as
	\begin{align}
		g = A \sigma_z^2+B \sigma_x^2+ C \sigma_y^2 + D \sigma_x \sigma_y
	\end{align}
	and we obtain
	\begin{subequations}
		\begin{align}
			L_{e_1'}\sigma_z &= 0\\
			L_{e_1'}\sigma_x &= \sigma_y\\
			L_{e_1'}\sigma_y &= -\sigma_x.
		\end{align}
	\end{subequations}
	Then, since \(e_1'\) is just a scalar multiple of \(e_1\), the \(U(1)\) right invariance implies
	\begin{subequations}
		\begin{align}
			0 &= L_{e_1'}g\\
			&= B(\sigma_y \otimes \sigma_x +\sigma_x \otimes \sigma_y) + C(-\sigma_x \otimes \sigma_y -\sigma_y \otimes \sigma_x)+D(\sigma_y \otimes \sigma_y - \sigma_x \otimes \sigma_x)\\
			&= (B-C)(\sigma_y \otimes \sigma_x + \sigma_x \otimes \sigma_y) + D(\sigma_y \otimes \sigma_y - \sigma_x \otimes \sigma_x).
		\end{align}
	\end{subequations}
	Thus, implying \(B = C, \, D = 0\), such that
	\begin{align}
		g = A \sigma_z^2+B (\sigma_x^2+\sigma_y^2)
	\end{align}
\end{proof}

Combining the results of Lemma \ref{u1timelike} and \ref{canonmetric}, we see that a \(SU(2)_L \times U(1)_R\)--invariant Riemannian/Lorentz metric \(g\) on \(SU(2)\) can always be written as
\begin{align}
	g = \varepsilon A^2 \sigma_z^2+B^2 (\sigma_x^2+\sigma_y^2),
\end{align}
where \(\varepsilon = 1\) corresponds to the Riemannian and 
\(\varepsilon = -1\) to the Lorentzian case. Using Euler-angle coordinates the metric is given by
\begin{align}
	g = \varepsilon A^2 (d\psi+\cos\theta d\varphi)^2 + B^2 (d\theta^2+\sin^2\theta d\varphi^2).
\end{align}

An essential observation regarding the orbits of the isometry group \(SU(2) \times U(1)\) of the Taub-NUT space-time is that the orbit corresponding to the \(SU(2)\)-action are three-dimensional and the orbit with respect to the \(U(1)\)-action is a subset of it. Thus simply requiring the group orbits of a general space-time with isometry group \(SU(2) \times U(1)\) to be three-dimensional, does not exclude the possibility that the action of \(SU(2)\) generates two-dimensional orbits and the action of \(U(1)\) transversal one-dimensional orbits. For that matter we will in the following study the implications of three-dimensional orbits generated by a \(SU(2)\)-action on a space-time.\\

Let \(M\) be a manifold admitting a \(SU(2)\) left-action in such a way that the group orbits are three-dimensional. Since \(SU(2)\) is a compact Lie group the action is proper and thus each group orbit is a closed subset of \(M\) and each isotropy group is compact. Let \(O(p), \, I_p\) denote the orbit and isotropy group of \(p\) respectively. Then since \(I_p\) is a closed subgroup and the action of \(SU(2)\) on its group orbits is transitive, we have
\begin{align}
	dim \, O(p) = dim \, SU(2) - dim \, I_p.
\end{align}
Thus the isotropy group is a discrete subgroup of \(SU(2)\) and since it is compact it has to be finite. The orbit \(O(p)\) is a homogeneous \(SU(2)\)-space and we have
\begin{align}
	O(p) \cong SU(2)/I_p.
\end{align}
Since for a connected Lie group \(G\) and a discrete subgroup \(\Gamma\) the quotient is a manifold and the quotient map is a (normal) covering map, we get a fibration with base space  \(O(p) \cong SU(2)/I_p\), discrete fibers \(I_p\) and total space \(SU(2)\)
\begin{center}
	\begin{tikzcd}
	I_p\arrow[r]
	& SU(2)\arrow[d] \\
	& O(p) \cong SU(2)/I_p.
	\end{tikzcd}
\end{center}
For \(SU(2)\) is connected and simply connected it is the universal cover. By the use of the long exact sequence of homotopy groups for the fibration we obtain
\begin{align}
	\cdots \rightarrow \pi_1(I_p) \rightarrow \pi_1(SU(2)) \rightarrow \pi_1(O(p)) \rightarrow \pi_0(I_p) \rightarrow \pi_0(SU(2)).
\end{align}
Noting that \(SU(2) \cong S^3\), \(\pi_1(SU(2))\) is trivial like \(\pi_0(SU(2))\). Furthermore, we have \(\pi_0(I_p) \cong I_p\) and thus obtain the short exact sequence
\begin{align}
	0 \rightarrow \pi_1(O(p)) \rightarrow I_p \rightarrow 0,
\end{align}
implying \(\pi_1(O(p)) \cong I_p\). Summarizing, we see that the group orbits are closed three-dimensional manifolds with finite fundamental group. But then by Thurston's elliptisation conjecture
(now proven) the group orbits have to be elliptic 3-manifolds. These have been classified to be of the form \(M = S^3/\Gamma\), with \(\pi_1(M) = \Gamma\) being a finite subgroup of \(SO(4)\), acting freely and orthogonally on \(M\)
in the standard fashion. Out of these, the only ones admitting \(SU(2) \times U(1)\) as an isometry group are the Lens spaces \(L(n,1)\) \cite{Hong.EtAl:2012}.
\\

As already proven a \(SU(2)_L \times U(1)_R\)--invariant metric on \(SU(2) \cong S^3\) can be put into a canonical form. Now we want to study the case for \(L(n,1)\). Considering \(S^3 \subset \mathbb{C}^2\) the left action of \(SU(2)\) on \(S^3 \) is the natural action of \(SU(2)\) on \(\mathbb{C}^2\) and the \(\Gamma = \mathbb{Z}_n\)-action on \(S^3\) for \(L(n,1)\) is given by
\begin{align}
	\label{zn-action}
	(z_0,z_1) \mapsto (e^{2\pi i/n}z_0,e^{2\pi i/n}z_1), \quad (z_0,z_1) \in S^3. 
\end{align}
Now we can define a left action of \(SU(2)\) on \(L(n,1)\) by
\begin{subequations}
	\begin{align}
		&SU(2) \times L(n,1) \to L(n,1)\\
		&(A, \pi(p)) \mapsto \pi(Ap),
	\end{align}
\end{subequations}
where \(\pi: S^3 \to L(n,1)\) is the projection map, so the covering map. This induces a well-defined \(SU(2)\) left action on \(L(n,1)\). Similarly we have a well-defined induced \(U(1)\) right action.\\

Moreover, given a \(SU(2)_L \times U(1)_R\)--invariant metric it is also invariant with respect to the \(\mathbb{Z}_n\)-action \eqref{zn-action} and hence the following construction is well-defined:\\

Let \(g\) be a \(\mathbb{Z}_n\)--invariant metric on \(S^3\). We define a metric on \(L(n,1)\) pointwise by
\begin{subequations}
	\begin{align}
		&g'_q :T_qL(n,1) \times T_qL(n,1) \to \mathbb{R} \\
		&g'_q (X',Y') \coloneqq g_p(X,Y), \quad X',Y' \in T_qL(n,1), \, X,Y \in T_pS^3
	\end{align}
\end{subequations}
where \(\pi(p) = q\) and \(d\pi_p(X) = X', \, d\pi_p(Y) = Y'\). The \(\mathbb{Z}_n\)-invariance of \(g\) implies the independence of the choice of a representative \(p\) and since the covering map is a local diffeomorphism its differential at any point is a linear isomorphism. Therefore this definition makes sense. The metric \(g'\) is in fact smooth, since given any smooth local section \(s:U\subset L(n,1) \to S^3\) and smooth local vector fields \(X',Y'\) on \(U\) we have on \(U\)
\begin{align}
	g'(X',Y') = g(ds(X'),ds(Y')).
\end{align}
This linear map is a bijection, because conversely given any metric \(g'\) on \(L(n,1)\) we can define a metric \(g = \pi^*g'\) on \(S^3\), being just the preimage of the preceding construction. Thus, we see that the \(\mathbb{Z}_n\)--invariant metrics on \(S^3\) are in bijection to metrics on \(L(n,1)\). Since a \(SU(2)_L \times U(1)_R\)--invariant metric on \(S^3\) is also invariant with respect to the \(\mathbb{Z}_n\)-action \eqref{zn-action}, by definition of the induced action and the constructed bijection above, it is immediate that the \(SU(2)_L \times U(1)_R\)--invariant metrics on \(S^3\) are in one-to-one correspondence to the metrics on \(L(n,1)\) which are invariant under the induced \(SU(2)_L \times U(1)_R\)-action.\\

Using Euler coordinates and Lemma \ref{canonmetric}, a Lorentz or Riemannian metric on \(L(n,1)\) invariant under \(SU(2)_L \times U(1)_R\) can always be written as
\begin{align}
	g = \varepsilon A^2(d\psi+\cos\theta d\varphi)^2 + B^2 (d\theta^2+\sin^2\theta d\varphi^2), \quad \varepsilon = \pm 1
\end{align}
where \(\theta\) and \(\varphi\) ranging from \(0\) to \(\pi\) and \(0\) to \(2\pi\) respectively and \(\psi\) being \(4\pi/n\)-periodic.\\

In particular, based on the preceding results, we see that the metric \eqref{mismetric} is invariant with respect to the \(\mathbb{Z}_n\)-action \eqref{zn-action} and hence there exists a well-defined metric on any Lens space \(L(n,1)\). Thus the space-time can be generalized to \((\mathbb{R}\times L(n,1),g)\) with 
\begin{align}
	g = -4l^2f(r)(d\psi+\cos\theta d\varphi)^2+\frac{1}{f(r)}dr^2+(r^2+l^2)(d\theta^2+\sin^2\theta d\varphi^2),
\end{align}
which will be called the \emph{generalized Taub-NUT space-time}. The space-time \((\mathbb{R}\times L(n,1),g')\), with
\begin{align}
	g' =  -4l^2f(r)(d\psi'+\cos\theta d\varphi)^2+2(2l)(d\psi'+\cos\theta d\varphi)dr+(r^2+l^2)(d\theta^2+\sin^2\theta d\varphi^2)
\end{align}
will be called a \emph{maximal extension of the generalized Taub-NUT space-time}.\\

\noindent
Now we can prove the following statement:

\begin{theorem}
	Every (\(C^2\)-) solution to the vacuum Einstein field equations admitting \(SU(2) \times U(1)\) as an isometry group, such that \(SU(2)\times U(1)\) and \(SU(2)\) both have three-dimensional non-null orbits in an open subset \(U\), is locally isometric to a maximal extension of the generalized Taub-NUT space-time.
\end{theorem}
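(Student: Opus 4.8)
The plan is to treat the spacetime as a cohomogeneity-one manifold for the \(SU(2)\times U(1)\) action and to reduce the vacuum equations to ordinary differential equations along the one-dimensional orbit space. By the orbit analysis already carried out, in the open set \(U\) the group orbits are three-dimensional lens spaces \(L(n,1)\), so that \(U\) is locally diffeomorphic to \(I\times L(n,1)\) for an interval \(I\); on each orbit we have Euler coordinates \((\psi,\theta,\varphi)\) and the left-invariant forms \(\sigma_x,\sigma_y,\sigma_z\). First I would pick a smooth invariant function \(r\) whose level sets are the orbits, so that \(dr\) is \(SU(2)_L\times U(1)_R\)-invariant and transverse. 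Since every metric coefficient is constant along orbits it depends on \(r\) alone, and \(SU(2)_L\)-invariance forces \(g\) to be assembled from \(dr,\sigma_x,\sigma_y,\sigma_z\). Imposing \(U(1)_R\)-invariance and reusing the computation of Lemma~\ref{canonmetric} (namely \(L_{e_1'}\sigma_z=0\), \(L_{e_1'}\sigma_x=\sigma_y\), \(L_{e_1'}\sigma_y=-\sigma_x\), and \(L_{e_1'}dr=0\)) kills the \(dr\,\sigma_x\), \(dr\,\sigma_y\) and \(\sigma_x\sigma_y\) terms, leaving
\begin{align*}
g = P(r)\,dr^2 + 2N(r)\,dr\,\sigma_z + a(r)\,\sigma_z^2 + b(r)(\sigma_x^2+\sigma_y^2).
\end{align*}

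Next comes the gauge reduction. Because the orbit is non-null its induced metric \(a\,\sigma_z^2+b(\sigma_x^2+\sigma_y^2)\) is non-degenerate: in the Lorentzian-orbit region Lemma~\ref{u1timelike} makes the \(U(1)\)-orbit timelike, so \(a=g(\partial_\psi,\partial_\psi)<0\), while in the Riemannian-orbit region positive-definiteness gives \(a>0\); in both cases \(a\neq 0\) and \(b>0\) throughout \(U\). I would then complete the square in the \((dr,\sigma_z)\) block and absorb \(N/a\) into a shift of \(\psi\) by a function of \(r\). Since only \(\sigma_z\) changes by a multiple of \(dr\) under such a shift, whereas the rotation-invariant combination \(\sigma_x^2+\sigma_y^2\) is untouched, this diagonalises the metric to
\begin{align*}
g = \tilde P(r)\,dr^2 + a(r)\,\sigma_z^2 + b(r)(\sigma_x^2+\sigma_y^2),
\end{align*}
and a reparametrisation of \(r\) fixes the one remaining gauge freedom.

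The heart of the proof is then to impose \(\mathrm{Ric}(g)=0\). Using the Maurer-Cartan relations \(d\sigma_x=-\sigma_y\wedge\sigma_z\) and cyclic, I would compute the Levi-Civita connection and the Ricci tensor in the adapted frame; by the biaxial symmetry the only independent components are those along \(dr\), along \(\sigma_z\), and the (equal) pair along \(\sigma_x,\sigma_y\), giving a closed ODE system for \(\tilde P,a,b\). Integrating this system, with \(a\neq0\) excluding the degenerate branch and forcing the NUT parameter \(l\neq0\), I expect to recover exactly \(b=r^2+l^2\), \(a=-4l^2f(r)\), \(\tilde P=1/f(r)\) with \(f\) as in \eqref{mismetric} and two integration constants identified with \(m\) and \(l\). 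This exhibits \((U,g)\) as isometric to a piece of generalised Taub-NUT; and since this Misner form degenerates only at the Killing horizons \(r_\pm\), the coordinate change \(\psi'=\psi+\int dr/(2lf)\) produces the regular form and embeds \((U,g)\) isometrically into a maximal extension.

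The main obstacle I anticipate is precisely this ODE step: carrying out the curvature computation for the invariant ansatz and, above all, proving that its \emph{general} Lorentzian solution is the Taub-NUT family and nothing else. In particular one must argue away the Riemannian self-dual solutions of the same reduced system (the Eguchi-Hanson-type metrics), which are Euclidean and have the \(\sigma_z\)-circle growing rather than asymptotically fixed; here the Lorentzian signature of \(g\) together with the non-null, genuinely three-dimensional orbit condition on \(U\) should be exactly what singles out Taub-NUT. A secondary care point is keeping track of the sign of \(a\) and \(\tilde P\) across the Taub and NUT regimes, so that the same integrated solution is obtained whether \(r\) is spacelike or timelike.
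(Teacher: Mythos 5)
Your setup coincides with the paper's: reduce to a cohomogeneity-one ansatz on \(I\times L(n,1)\), use Lemmas~\ref{u1timelike} and \ref{canonmetric} to bring the orbit metric to the biaxial form, and eliminate the \(dr\,\sigma_z\) cross term (the paper does this by taking \(N=TO^\perp\) as the transverse distribution from the outset, which is non-null precisely because the orbits are; your completing-the-square plus \(\psi\)-shift is an equivalent local gauge choice). The difficulty is that the part of the argument you yourself flag as ``the heart of the proof'' is not carried out: you write that you \emph{expect} the reduced vacuum equations to integrate to \(b=r^2+l^2\), \(a=-4l^2f\), \(\tilde P=1/f\), but the theorem is exactly the claim that the \emph{general} solution of that ODE system is the two-parameter Taub--NUT family, and nothing in the proposal establishes it. The paper does this explicitly: with \(R(r)=r\), the combination \(G_{00}+G_{11}=0\) becomes the separable equation \(r^3D'+\tfrac14 D^3=0\) for \(D=AB\), giving \(D^2=4r^2c_0/(r^2-c_0)\); then \(G_{00}=0\) becomes a linear first-order ODE for \(F=B^2\) solved by variation of constants, and the remaining component \(G_{22}=G_{33}=0\) is verified to hold automatically. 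The two integration constants \(c_0,c_1\) become \(l^2\) and \(8c_0 m\) after the substitution \(r'=\sqrt{r^2-c_0}\). Without this integration (done twice, for \(\varepsilon=\pm1\)) there is no proof.

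A second concrete gap: before you may use the areal function as a coordinate (\(R(r)=r\), or equivalently treat \(b\) as a function of a transverse coordinate with non-degenerate differential), you must exclude \(\langle dR,dR\rangle=0\). The paper rules this out by showing that \(R=R_0=\mathrm{const}\) forces \(B^2=4R_0^2\) from \(G_{00}=0\) and then \(G_{11}=2/R_0^2\neq0\), a contradiction, while a non-zero null \(dR\) is excluded by the non-null orbit hypothesis. Your ``reparametrisation of \(r\) fixes the one remaining gauge freedom'' silently assumes this. Finally, your concern about Eguchi--Hanson-type solutions is moot — the ambient metric is Lorentzian, so the Euclidean self-dual branch never enters; the case analysis that actually matters is spacelike versus timelike orbits (\(\varepsilon=\pm1\)) and the smooth joining of the Taub and NUT regions across the null hypersurfaces via the \(\psi'=\psi\pm\int dr/(2lf)\) extension, which you do mention but only for one sign.
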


\begin{proof}
	Let \(p \in M\) be an arbitrary point of the space-time \((M,g)\) and \(O(p)\) be the three-dimensional orbit of \(p\) with respect to the action of \(SU(2)\). We define the orthogonal complement of the tangent space of the orbit to be: \(N_p \coloneqq T_pO(p)^{\perp}\). Then the induced distributions
	\begin{align}
		N \coloneqq \cup_{p \in M}N_p, \quad O \coloneqq \cup_{p \in M}T_pO(p)
	\end{align}
	are both integrable, for \(N\) is a one-dimensional distribution, which is always integrable, and \(O\) is by construction integrable, with its integral manifolds being the orbits. So we have a involutive three-dimensional distribution, spanned by Killing vector fields and the involutive one-dimensional normal bundle with \(N \cap O = {0}\), since the orbits are non-null. Thus, it is possible to introduce local coordinates \(\{x^\mu\} = \{r,x^1,x^2,x^3\}\) such that
	\begin{align}
		g = g_{rr} dr^2 + g_{ab}(x^\mu)dx^adx^b
	\end{align}
	where \(r = const.\) are the integral manifolds of \(O\), the orbits, which are homogeneous spaces. Since the three-dimensional orbits of a \(SU(2)\)-action admitting \(SU(2) \times U(1)\) as isometry group have to be topologically the Lens spaces \(L(n,1)\) and the \(SU(2)_L \times U(1)_R\)--invariant Lorentz or Riemannian metrics on \(L(n,1)\) can always be put into a canonical form, the metric can be written as
	\begin{align}
		g = -\varepsilon A^2(r) dr^2 + \varepsilon B^2(r) (d\psi+\cos\theta d\varphi)^2 + R^2(r) (d\theta^2+\sin^2\theta d\varphi^2),
	\end{align}
	where the case \(\varepsilon = 1\) represents spacelike orbits and  \(\varepsilon = -1\) timelike orbits.\\
	
	\noindent
	Now to solve the field equations, it is necessary to calculate the corresponding Ricci tensor. We will calculate them using an orthonormal tetrad and the Cartan structure equations.\\
	
	\noindent
	First we will consider the case \(\varepsilon = 1\), so spacelike orbits. The orthonormal tetrad we will be using is
	\begin{subequations}
		\begin{align}
			\vartheta^0 &= A(r) dr\\
			\vartheta^1 &= B(r)(d\psi+\cos\theta d\varphi)\\
			\vartheta^2 &= R(r) d\theta\\
			\vartheta^3 &= R(r)\sin\theta d\varphi.
		\end{align}
	\end{subequations}
	In the following the argument of the functions \(A,B,R\) will be omitted and a prime indicates the derivative with respect to \(r\). Then exterior differentiation and expressing the results in terms of the tetrad leads to
	\begin{subequations}
		\begin{align}
			d\vartheta^0 &= 0\\
			d\vartheta^1 &= \frac{B'}{AB} \vartheta^0 \wedge \vartheta^1 - \frac{B}{R^2} \vartheta^2 \wedge \vartheta^3\\
			d\vartheta^2 &= \frac{R'}{AR} \vartheta^0 \wedge \vartheta^2\\
			d\vartheta^3 &=\frac{R'}{AR} \vartheta^0 \wedge \vartheta^3 + \frac{\cot\theta}{R} \vartheta^2 \wedge \vartheta^3.
		\end{align}
	\end{subequations}
	Since the tetrad is orthonormal we have \(\omega_{\mu\nu} + \omega_{\nu\mu} = 0\). Now using the first structure equation with an ansatz for every connection one-form of the form \(a_\mu\vartheta^\mu\), the unique solution is given by
	\begin{subequations}
		\begin{align}
			\omega^0{}_{1} &= \omega^1{}_{0} = \frac{B'}{AB} \vartheta^1\\
			\omega^0{}_{2} &= \omega^2{}_{0} = \frac{R'}{AR} \vartheta^2\\
			\omega^0{}_{3} &= \omega^3{}_{0} = \frac{R'}{AR} \vartheta^3\\
			\omega^1{}_{2} &= -\omega^2{}_{1} = - \frac{B}{2R^2} \vartheta^3\\
			\omega^1{}_{3} &= -\omega^3{}_{1} = \frac{B}{2R^2} \vartheta^2\\
			\omega^2{}_{3} &= -\omega^3{}_{2} = \frac{B}{2R^2} \vartheta^1 -  \frac{\cot\theta}{R} \vartheta^3.
		\end{align}
	\end{subequations}

Then using the second structure equations, the curvature \(2\)-form can be calculated:
	\begin{subequations}
		\begin{align}
			\Omega^0{}_1 &= d\omega^0{}_1+\omega^0{}_2 \wedge \omega^2{}_1 + \omega^0{}_3 \wedge \omega^3{}_1 \nonumber\\
			&= \left(\frac{B''}{AB}-\frac{B'A'}{A^2B}-\frac{B'^2}{AB^2}\right)dr \wedge \vartheta^1 + \frac{B'}{AB} d\vartheta^1 + \frac{BR'}{2AR^3} \vartheta^2 \wedge \vartheta^3 - \frac{BR'}{2AR^3} \vartheta^3 \wedge \vartheta^2 \nonumber\\
			&= \left(\frac{B''}{A^2B}-\frac{B'A'}{A^3B}-\frac{B'^2}{A^2B^2}\right) \vartheta^0 \wedge \vartheta^1 + \frac{B'^2}{A^2B^2} \vartheta^0 \wedge \vartheta^1  - \frac{B'}{AR^2} \vartheta^2 \wedge \vartheta^3 + \frac{BR'}{AR^3} \vartheta^2 \wedge \vartheta^3 \nonumber\\
			&= \left(\frac{B''}{A^2B}-\frac{B'A'}{A^3B}\right) \vartheta^0 \wedge \vartheta^1 + \left(\frac{BR'}{AR^3} - \frac{B'}{AR^2}\right) \vartheta^2 \wedge \vartheta^3\\
			\Omega^0{}_2 &= d\omega^0{}_2+\omega^0{}_1 \wedge \omega^1{}_2 + \omega^0{}_3 \wedge \omega^3{}_2 \nonumber\\
			&= \left(\frac{R''}{A^2R}-\frac{R'A'}{A^3R}-\frac{R'^2}{A^2R^2}\right) \vartheta^0 \wedge \vartheta^2+ \frac{R'^2}{A^2R^2} \vartheta^0 \wedge \vartheta^2 - \frac{B'}{2AR^2} \vartheta^1 \wedge \vartheta^3 - \frac{BR'}{2AR^3} \vartheta^3 \wedge \vartheta^1 \nonumber\\
			&= \left(\frac{R''}{A^2R}-\frac{R'A'}{A^3R}\right) \vartheta^0 \wedge \vartheta^2 		-\frac{1}{2}\left(\frac{B'}{AR^2} - \frac{BR'}{AR^3}\right) \vartheta^1 \wedge \vartheta^3\\
			\Omega^0{}_3 &= d\omega^0{}_3+\omega^0{}_1 \wedge \omega^1{}_3 + \omega^0{}_2 \wedge \omega^2{}_3 \nonumber\\
			&= \left(\frac{R''}{A^2R}-\frac{R'A'}{A^3R}-\frac{R'^2}{A^2R^2}\right) \vartheta^0 \wedge \vartheta^3+ \frac{R'^2}{A^2R^2} \vartheta^0 \wedge \vartheta^3 + \frac{R'\cot\theta}{AR^2} \vartheta^2 \wedge \vartheta^3 + \frac{B'}{2AR^2} \vartheta^1 \wedge \vartheta^2 \nonumber \\& \hspace{1em}+ \frac{BR'}{2AR^3} \vartheta^2 \wedge \vartheta^1- \frac{R'\cot\theta}{AR^2} \vartheta^2 \wedge \vartheta^3 \nonumber \\
			&= \left(\frac{R''}{A^2R}-\frac{R'A'}{A^3R}\right) \vartheta^0 \wedge \vartheta^3 +\frac{1}{2}\left(\frac{B'}{AR^2} - \frac{BR'}{AR^3}\right) \vartheta^1 \wedge \vartheta^2
		\end{align}
	\end{subequations}

	\begin{subequations}
		\begin{align}
			\Omega^1{}_2 &= d\omega^1{}_2+\omega^1{}_0 \wedge \omega^0{}_2 + \omega^1{}_3 \wedge \omega^3{}_2 \nonumber\\
			&= \left(-\frac{B'}{2AR^2} + \frac{BR'}{AR^3}\right) \vartheta^0 \wedge \vartheta^3 - \frac{BR'}{2AR^3} \vartheta^0 \wedge \vartheta^3 - \frac{B\cot\theta}{2R^3}\vartheta^2 \wedge \vartheta^3 + \frac{B'R'}{A^2BR} \vartheta^1 \wedge \vartheta^2 \nonumber \\& \hspace{1em} - \frac{B^2}{4R^4} \vartheta^2 \wedge \vartheta^1 + \frac{B\cot\theta}{2R^3}\vartheta^2 \wedge \vartheta^3 \nonumber \\
			&= \frac{1}{2}\left(\frac{BR'}{AR^3} - \frac{B'}{AR^2} \right) \vartheta^0 \wedge \vartheta^3 + \left( \frac{B'R'}{A^2BR} + \frac{B^2}{4R^4}\right) \vartheta^1 \wedge \vartheta^2\\
			\Omega^1{}_3 &= d\omega^1{}_3+\omega^1{}_0 \wedge \omega^0{}_3 + \omega^1{}_2 \wedge \omega^2{}_3 \nonumber\\
			&= \left(\frac{B'}{2AR^2} - \frac{BR'}{AR^3}\right) \vartheta^0 \wedge \vartheta^2 + \frac{BR'}{2AR^3} \vartheta^0 \wedge \vartheta^2 + \frac{B'R'}{A^2BR} \vartheta^1 \wedge \vartheta^3 - \frac{B^2}{4R^4} \vartheta^3 \wedge \vartheta^1 \nonumber\\
			&= \frac{1}{2}\left(\frac{B'}{AR^2} - \frac{BR'}{AR^3}\right) \vartheta^0 \wedge \vartheta^2 + \left( \frac{B'R'}{A^2BR} + \frac{B^2}{4R^4}\right) \vartheta^1 \wedge \vartheta^3\\
			\Omega^2{}_3 &= d\omega^2{}_3+\omega^2{}_0 \wedge \omega^0{}_3 + \omega^2{}_1 \wedge \omega^1{}_3 \nonumber\\
			&= \left(\frac{B'}{2AR^2} - \frac{BR'}{AR^3}\right) \vartheta^0 \wedge \vartheta^1 + \frac{B'}{2AR^2} \vartheta^0 \wedge \vartheta^1 - \frac{B^2}{2R^4} \vartheta^2 \wedge \vartheta^3 + \frac{\csc^2\theta}{R^2} \vartheta^2 \wedge \vartheta^3 \nonumber\\ &\hspace{1em}+\frac{R'\cot\theta}{AR^2} \vartheta^0 \wedge \vartheta^3 - \frac{R'\cot\theta}{AR^2} \vartheta^0 \wedge \vartheta^3 - \frac{\cot^2\theta}{R^2} \vartheta^2 \wedge \vartheta^3 + \frac{R'^2}{A^2R^2} \vartheta^2 \wedge \vartheta^3 +\frac{B^2}{4R^4} \vartheta^3 \wedge \vartheta^2 \nonumber\\
			&= \left(\frac{B'}{AR^2} - \frac{BR'}{AR^3}\right) \vartheta^0 \wedge \vartheta^1 + \left(\frac{1}{R^2}+ \frac{R'^2}{A^2R^2}-\frac{3}{4}\frac{B^2}{R^4}\right) \vartheta^2 \wedge \vartheta^3.
		\end{align}
	\end{subequations}
	Using \(\Omega^\mu{}_\nu = \frac{1}{2}R^{\mu}_{\nu\alpha\beta}\theta^\alpha\wedge\theta^\beta\), the non-vanishing components of the Riemann tensor, up to symmetry, are
	\begin{subequations}
		\begin{align}
			R^0{}_{101} &= \frac{B''}{A^2B}-\frac{B'A'}{A^3B}\\
			R^0{}_{123} &= 2R^0{}_{213} = -2R^0{}_{312} = \frac{BR'}{AR^3} - \frac{B'}{AR^2}\\
			R^0{}_{202} &= R^0{}_{303} = \frac{R''}{A^2R}-\frac{R'A'}{A^3R}\\
			R^1{}_{212} &= R^1{}_{313} = \frac{B'R'}{A^2BR} + \frac{B^2}{4R^4}\\
			R^2{}_{323} &= \frac{1}{R^2}+ \frac{R'^2}{A^2R^2}-\frac{3}{4}\frac{B^2}{R^4}
		\end{align}
	\end{subequations}
	The components of the Ricci tensor are
	\begin{subequations}
		\begin{align}
			R_{00} &= -R^0{}_{101} - 2 R^0{}_{202}\\
			R_{11} &= R^0{}_{101} + 2 R^1{}_{212}\\
			R_{22} &= R_{33} = R^0{}_{202} + R^1{}_{212} + R^2{}_{323},
			\intertext{with all other components being zero. The Ricci scalar is then}
			R &= -R_{00} + R_{11} + R_{22} + R_{33}\\
			&= 2\left(R^0{}_{101}+2R^1{}_{212}+2R^0{}_{202}+ R^2{}_{323}\right).
			\intertext{Thus, we obtain the following non-vanishing components of the Einstein tensor}
			G_{00} &= 2 R^1{}_{212} + R^2{}_{323}\\
			G_{11} &= -2R^0{}_{202} - R^2{}_{323}\\
			G_{22} &= G_{33} = -R^0{}_{202} - R^1{}_{212} - R^0{}_{101}.
		\end{align}
	\end{subequations}
	Now we will solve the vacuum Einstein field equations. Suppose \(\langle dR,dR\rangle = 0\) in \(U\). Then, if \(R\) is constant in \(U\), \(R = R_0 >0 \), we have
	\begin{subequations}
		\begin{align}
			0 &= G_{00} = \frac{B^2}{2R^4_0}+\frac{1}{R^2_0}-\frac{3}{4}\frac{B^2}{R^4_0}\\
			\iff B^2 &= 4R^2_0 = \text{const}.
		\end{align}
	\end{subequations}
	Inserting this condition in \(0 = G_{11}\) we arrive at the following 
contradiction: 
	\begin{align}
		0 = G_{11} &= -\frac{1}{R^2_0}+\frac{3}{4}\frac{B^2}{R^4_0} = \frac{2}{R^2_0}.
	\end{align}
On the other hand, \(dR\) can not be non-zero and lightlike, since the group orbits are non-null. Therefore, considering \(\langle dR,dR\rangle \neq 0\) in \(U\), we will choose coordinates such that \(R(r) = r\). Next, to solve the field equations, we will consider
	\begin{subequations}
		\begin{align}
			0 &= G_{00} + G_{11}= 2\left(R^1{}_{212}-R^0{}_{202}\right)\\
			\iff 0 &= r^3(AB'+A'B)+\frac{1}{4}A^3B^3\\
			&= r^3(AB)'+\frac{1}{4}(AB)^3.
		\end{align}
	\end{subequations}
	Defining \(D(r) = A(r)B(r)\), we get the first order ordinary differential equation for \(D\)
	\begin{align}
		r^3D'+\frac{1}{4}D^3 = 0,
	\end{align}
	which can be integrated to give
	\begin{align}
		D^2(r) = \frac{4r^2c_0}{r^2-c_0},
	\end{align}
	with \(c_0 > 0\). Thus we need to have \(r^2 > c_0\). Now to solve
	\begin{align}
		0 &= G_{00} = 2\frac{B'}{A^2Br} - \frac{B^2}{4r^4} + \frac{1}{r^2} + \frac{1}{A^2r^2}
	\end{align}
	we use \(D^2 = A^2B^2\) and multiply the equation by \(4c_0r^4\) to get
	\begin{subequations}
		\begin{align}
			0 &= r(r^2-c_0)2BB'- c_0B^2 + 4r^2c_0 + (r^2-c_0)B^2\\
			&= r(r^2-c_0)(B^2)'+(r^2-2c_0)B^2 + 4r^2c_0.
		\end{align}
	\end{subequations}
	Then, by introducing \(F(r) \coloneqq B^2(r)\), we obtain an inhomogeneous first order linear differential equation
	\begin{align}
		0 = F'+\frac{r^2-2c_0}{r(r^2-c_0)}F + \frac{4rc_0}{r^2-c_0}.
	\end{align}
	To solve this differential equation, we will first solve the corresponding homogeneous equation:
	\begin{subequations}
		\begin{align}
			0 &= F_h'+\frac{r^2-2c_0}{r(r^2-c_0)}F_h\\
			&= F_h'+\frac{2(r^2-c_0)-r^2}{r(r^2-c_0)}F_h\\
			&= F_h'+\left(\frac{2}{r}-\frac{r}{r^2-c_0}\right)F_h.
		\end{align}
	\end{subequations}
	Thus, by integrating, the solution is given by
	\begin{subequations}
		\begin{align}
			&ln(F_h) = -\left(ln(r^2)-ln\left(\sqrt{r^2-c_0}\right)\right)+c\\
			\iff & F_h = c'\frac{\sqrt{r^2-c_0}}{r^2}, \quad c' = e^c.
		\end{align}
	\end{subequations}
	Now to obtain the general solution, we will multiply the inhomogeneous equation by \(c'/F_h\):
	\begin{subequations}
		\begin{align}
			0 &= \frac{r^2}{\sqrt{r^2-c_0}}F'+\frac{r^3-2rc_0}{(r^2-c_0)^{3/2}}F + \frac{4r^3c_0}{(r^2-c_0)^{3/2}}\\
			&= \left(\frac{r^2}{\sqrt{r^2-c_0}}F\right)' + \frac{4r^3c_0}{(r^2-c_0)^{3/2}}.
		\end{align}
	\end{subequations}
	Hence, the solution is given by
	\begin{subequations}
		\begin{align}
			F = c_1\frac{\sqrt{r^2-c_0}}{r^2}-\frac{\sqrt{r^2-c_0}}{r^2}\int\frac{4r^3c_0}{(r^2-c_0)^{3/2}}dr.
		\end{align}
	\end{subequations}
	To solve the integral we will use the substitution \(u = \sqrt{r^2-c_0}\) with \(dr = \frac{u}{\sqrt{u^2+c_0}}du\) such that
	\begin{subequations}
		\begin{align}
			&\int \frac{4(u^2+c_0)^{3/2}c_0}{u^3}\frac{u}{\sqrt{u^2+c_0}}du\\
			= &\int 4c_0+\frac{4c_0{}^2}{u^2}du\\
			= &4c_0u-\frac{4c_0{}^2}{u}.
		\end{align}
	\end{subequations}
	Therefore, the solution of the inhomogeneous differential equation is
	\begin{align}
		F = B^2 = \frac{-4c_0r^2+c_1\sqrt{r^2-c_0}+8c_0^2}{r^2}.
	\end{align}
	Thus, we have solved \(G_{00} = G_{11} = 0\) and obtained functions \(B^2(r)\) and \(A^2(r) = \frac{D^2(r)}{B^2(r)}\) implying
	\begin{subequations}
		\begin{align}
			R^1{}_{212} &=  R^0{}_{202}\\
			2R^1{}_{212} &= - R^2{}_{323}.
		\end{align}
	\end{subequations}
	One can check that these functions satisfy \(R^0{}_{101} = R^2{}_{323}\), implying \(G_{22} = G_{33} = 0\).\\
	
	Inserting the functions in the metric, we see that there is a singularity at \(r^2 = c_0\). To check if the space-time has a curvature singularity, we will compute the Kretschmann scalar
	\begin{align}
	K = R^{\alpha\beta\mu\nu}R_{\alpha\beta\mu\nu}.
	\end{align}
	Using the symmetry of the Riemann tensor and the implications of the field equations, the Kretschmann scalar is given by
	\begin{align}
		K &=  4(R^0{}_{101})^2-8(R^0{}_{123})^2-8(R^0{}_{231})^2-8(R^0{}_{231})^2+4(R^0{}_{202})^2 \nonumber\\
		&\hspace{1em}+4(R^0{}_{303})^2+4(R^1{}_{212})^2+4(R^1{}_{313})^2+4(R^2{}_{323})^2 \nonumber\\
		&= 12\left((R^0{}_{101})^2-(R^0{}_{123})^2\right) \nonumber\\
		&= \frac{3}{4} c_0{}^{-2}r^{-12}\biggl(2048c_0{}^6-3072c_0{}^5r^2-18c_0c_1{}^2r^4+c_1{}^2r^6+128c_0{}^4\left(9r^4+4c_1\sqrt{r^2-c_0}\right)\\& \quad \qquad \qquad \quad  +48c_0{}^2c_1r^2\left(c_1+2r^2\sqrt{r^2-c_0}\right)-32c_0{}^3\left(c_1{}^2+2r^6+16c_1r^2\sqrt{r^2-c_0}\right)\biggr),
	\end{align}
	observing that it is regular at \(r^2 = c_0\),  thus indicating that the singularity is due to a poor choice of coordinates. To solve the coordinate singularity, we will use the following coordinate transformation:
	\begin{align}
		r' = \frac{1}{2\sqrt{c_0}}\int D(r)dr = \sqrt{r^2-c_0}, \quad dr' = \frac{1}{2\sqrt{c_0}}D(r)dr.
	\end{align}
	With respect to \(r'\) we have
	\begin{align}
		B^2(r') = -4c_0\frac{r'^2-\frac{c_1}{4c_0}r'-c_0}{r'^2+c_0}.
	\end{align}
	Defining \(l^2 \coloneqq c_0 > 0, \, m \coloneqq \frac{c_1}{8c_0}\), the metric takes the form
	\begin{subequations}
		\begin{align}
			g &= -\frac{4l^2}{B^2(r')}dr'^2+B^2(r')(d\psi+\cos\theta d\varphi)^2 + (r'^2+l^2) \, (d\theta^2+\sin^2\theta d\varphi^2)\\
			&= \frac{r'^2+l^2}{r'^2-2mr-l^2}dr'^2 - 4l^2\frac{r'^2-2mr'-l^2}{r'^2+l^2} (d\psi+\cos\theta d\varphi)^2 + (r'^2+l^2) \, (d\theta^2+\sin^2\theta d\varphi^2).
		\end{align}
	\end{subequations}
	So we see that we obtain the generalized Taub-NUT space-time and since we assumed that the orbits are spacelike, we get in fact the Taub-region. Now if we consider the case \(\varepsilon = -1\), so timelike orbits, we have
	\begin{align}
		g =  A^2(r) dr^2 - B^2(r)  (d\psi+\cos\theta d\varphi)^2 + R^2(r) (d\theta^2+\sin^2\theta d\varphi^2).
	\end{align}
	Using the orthonormal tetrad
	\begin{subequations}
		\begin{align}
			\vartheta^0 &= B(r)(d\psi+\cos\theta d\varphi)\\
			\vartheta^1 &= A(r) dr\\
			\vartheta^2 &= R(r) d\theta\\
			\vartheta^3 &= R(r)\sin\theta d\varphi.
		\end{align}
	\end{subequations}
	we obtain
	\begin{subequations}
		\begin{align}
			d\vartheta^0 &= \frac{B'}{AB} \vartheta^1 \wedge \vartheta^0 - \frac{B}{R^2} \vartheta^2 \wedge \vartheta^3\\
			d\vartheta^1 &= 0\\
			d\vartheta^2 &= \frac{R'}{AR} \vartheta^1 \wedge \vartheta^2\\
			d\vartheta^3 &=\frac{R'}{AR} \vartheta^1 \wedge \vartheta^3 + \frac{\cot\theta}{R} \vartheta^2 \wedge \vartheta^3.
		\end{align}
	\end{subequations}
	The connection one-forms are then given by
	\begin{subequations}
		\begin{align}
			\omega^0{}_{1} &= \omega^1{}_{0} = \frac{B'}{AB} \vartheta^0\\
			\omega^0{}_{2} &= \omega^2{}_{0} = -\frac{B}{2R^2} \vartheta^3\\
			\omega^0{}_{3} &= \omega^3{}_{0} = \frac{B}{2R^2}\vartheta^2\\
			\omega^1{}_{2} &= -\omega^2{}_{1} = - \frac{R'}{AR} \vartheta^2\\
			\omega^1{}_{3} &= -\omega^3{}_{1} = -\frac{R'}{AR} \vartheta^3\\
			\omega^2{}_{3} &= -\omega^3{}_{2} = -\frac{B}{2R^2} \vartheta^0 -  \frac{\cot\theta}{R} \vartheta^3.
		\end{align}
	\end{subequations}
	Due to their strong resemblance to the case of spacelike orbits we will see that solving the vacuum Einstein equations is analogous. Using the second structure equations the curvature \(2\)-form can be calculated
	\begin{subequations}
		\begin{align}
			\Omega^0{}_1 &= -\left(\frac{B''}{A^2B}-\frac{B'A'}{A^3B}\right) \vartheta^0 \wedge \vartheta^1 + \left(\frac{BR'}{AR^3} - \frac{B'}{AR^2}\right) \vartheta^2 \wedge \vartheta^3\\
			\Omega^0{}_2 &= \frac{1}{2}\left(\frac{BR'}{AR^3} - \frac{B'}{AR^2} \right) \vartheta^1 \wedge \vartheta^3 - \left( \frac{B'R'}{A^2BR} + \frac{B^2}{4R^4}\right) \vartheta^0 \wedge \vartheta^2\\
			\Omega^0{}_3 &= \frac{1}{2}\left(\frac{B'}{AR^2} - \frac{BR'}{AR^3}\right) \vartheta^1 \wedge \vartheta^2 - \left( \frac{B'R'}{A^2BR} + \frac{B^2}{4R^4}\right) \vartheta^0 \wedge \vartheta^3\\
			\Omega^1{}_2 &= -\left(\frac{R''}{A^2R}-\frac{R'A'}{A^3R}\right) \vartheta^1 \wedge \vartheta^2 -\frac{1}{2}\left(\frac{B'}{AR^2} - \frac{BR'}{AR^3}\right) \vartheta^0 \wedge \vartheta^3\\
			\Omega^1{}_3 &= -\left(\frac{R''}{A^2R}-\frac{R'A'}{A^3R}\right) \vartheta^1 \wedge \vartheta^3 -\frac{1}{2}\left(\frac{B'}{AR^2} - \frac{BR'}{AR^3}\right) \vartheta^0 \wedge \vartheta^2\\
			\Omega^2{}_3 &=\left(\frac{B'}{AR^2} - \frac{BR'}{AR^3}\right) \vartheta^0 \wedge \vartheta^1 + \left(\frac{1}{R^2}- \frac{R'^2}{A^2R^2}+\frac{3}{4}\frac{B^2}{R^4}\right) \vartheta^2 \wedge \vartheta^3.
		\end{align}
	\end{subequations}
	Then using \(\Omega^\mu{}_\nu = \frac{1}{2}R^{\mu}_{\nu\alpha\beta}\theta^\alpha\wedge\theta^\beta\), the non-vanishing components of the Riemann tensor, up to symmetry, are
	\begin{subequations}
		\begin{align}
			R^0{}_{101} &= -\frac{B''}{A^2B}+\frac{B'A'}{A^3B}\\
			R^0{}_{123} &= 2R^0{}_{213} = -2R^0{}_{312} = \frac{BR'}{AR^3} - \frac{B'}{AR^2}\\
			R^0{}_{202} &= R^0{}_{303} = -\frac{B'R'}{A^2BR} - \frac{B^2}{4R^4}\\
			R^1{}_{212} &= R^1{}_{313} = -\frac{R''}{A^2R}+\frac{R'A'}{A^3R}\\
			R^2{}_{323} &= \frac{1}{R^2}- \frac{R'^2}{A^2R^2}+\frac{3}{4}\frac{B^2}{R^4}.
		\end{align}
	\end{subequations}
	Analogously, we have the following non-vanishing components of the Einstein tensor
	\begin{subequations}
		\begin{align}
			G_{00} &= 2 R^1{}_{212} + R^2{}_{323}\\
			G_{11} &= -2R^0{}_{202} - R^2{}_{323}\\
			G_{22} &= G_{33} = -R^0{}_{202} - R^1{}_{212} - R^0{}_{101}.
		\end{align}
	\end{subequations}
	Now we will solve the vacuum Einstein field equations. Supposing \(\langle dR,dR\rangle = 0\) in \(U\) leads analogously to a contradiction. Therefore, we have \(\langle dR,dR\rangle \neq 0\) in \(U\) and we can choose coordinates such that \(R(r) = r\). Then to solve the field equations we will consider again \(0 = G_{00} + G_{11} =  2\left(R^1{}_{212}-R^0{}_{202}\right)\), but since \(R^1{}_{212}\) and \(R^0{}_{202}\) correspond to curvature components \(-R^0{}_{202}\) and \(-R^1{}_{212}\), respectively, for the case \(\varepsilon = 1\) we see that they satisfy the same differential equation. Hence defining \(D(r) = A(r)B(r)\), we have
	\begin{align}
		D^2(r) = \frac{4r^2c_0}{r^2-c_0},
	\end{align}
	with \(c_0 > 0\) and \(r^2 > c_0\). Now we will solve
	\begin{align}
		0 &= G_{11} = 2\frac{B'}{A^2Br} - \frac{B^2}{4r^4} - \frac{1}{r^2} + \frac{1}{A^2r^2}.
	\end{align}
	Using \(D^2 = A^2B^2\) and multiplying the equation by \(4c_0r^4\) we obtain
	\begin{subequations}
		\begin{align}
			0 &= r(r^2-c_0)2BB'- c_0B^2 - 4r^2c_0 + (r^2-c_0)B^2\\
			&= r(r^2-c_0)(B^2)'+(r^2-2c_0)B^2 - 4r^2c_0\,.
		\end{align}
	\end{subequations}
	Then again introducing \(F(r) \coloneqq B^2(r)\) we have the inhomogeneous first order linear differential equation
	\begin{align}
		0 = F'+\frac{r^2-2c_0}{r(r^2-c_0)}F - \frac{4rc_0}{r^2-c_0}.
	\end{align}
	Hence, we see that the corresponding homogeneous equation coincides with the one for spacelike orbits. Thus, we have
	\begin{align}
		F_h = c'\frac{\sqrt{r^2-c_0}}{r^2}.
	\end{align}
	Now to obtain the general solution we will multiply the inhomogeneous equation by \(c'/F_h\):
	\begin{subequations}
		\begin{align}
			0 &= \frac{r^2}{\sqrt{r^2-c_0}}F'+\frac{r^3-2rc_0}{(r^2-c_0)^{3/2}}F - \frac{4r^3c_0}{(r^2-c_0)^{3/2}}\\
			&= \left(\frac{r^2}{\sqrt{r^2-c_0}}F\right)' - \frac{4r^3c_0}{(r^2-c_0)^{3/2}}.
		\end{align}
	\end{subequations}
	Hence, the solution is given by
	\begin{subequations}
		\begin{align}
			F = B^2 &= c_1\frac{\sqrt{r^2-c_0}}{r^2}+\frac{\sqrt{r^2-c_0}}{r^2}\int\frac{4r^3c_0}{(r^2-c_0)^{3/2}}dr\\
			&= \frac{4c_0r^2+c_1\sqrt{r^2-c_0}-8c_0^2}{r^2}
		\end{align}
	\end{subequations}
	Now with the same line of argument we use the coordinate transformation
	\begin{align}
		r' = \frac{1}{2\sqrt{c_0}}\int D(r)dr = \sqrt{r^2-c_0}, \quad dr' = \frac{1}{2\sqrt{c_0}}D(r)dr.
	\end{align}
	such that
	\begin{align}
		B^2(r') = 4c_0\frac{r'^2+\frac{c_1}{4c_0}r'-c_0}{r'^2+c_0}.
	\end{align}
	Then defining \(l^2 \coloneqq c_0 > 0, \, m \coloneqq -\frac{c_1}{8c_0}\) we again obtain the generalized Taub-NUT metric
	\begin{subequations}
		\begin{align}
			g &= \frac{4l^2}{B^2(r')}dr'^2-B^2(r')(d\psi+\cos\theta d\varphi)^2 + (r'^2+l^2) (d\theta^2+\sin^2\theta d\varphi^2)\\
			&= \frac{r'^2+l^2}{r'^2-2mr-l^2}dr'^2 - 4l^2\frac{r'^2-2mr'-l^2}{r'^2+l^2} (d\psi+\cos\theta d\varphi)^2 + (r'^2+l^2) \, (d\theta^2+\sin^2\theta d\varphi^2),
		\end{align}
	\end{subequations}
	however describing the NUT-regions. If the orbits are not everywhere space- or timelike, we can join them smoothly along the null hypersurfaces by an extension of the form described in the last chapter.\\
\end{proof}

Thus, with respect to the constants, \(m\), \(l\) and \(n\), we have a three parameter family of vacuum space-times admitting \(SU(2) \times U(1)\) as an isometry group, such that \(SU(2)\times U(1)\) and \(SU(2)\) both have three-dimensional non-null orbits. The Taub-NUT space-time is then the unique universal cover. As in the case of the Taub-NUT space-time the generalized space-time can be considered to be a principal fibre bundle with respect to the \(U(1)\) right action with its first chern class being the constant \(n\). Furthermore, recalling the remarks in the last section, the constant \(m\) can be considered to be the Komar mass of the space-time. In particular, in this case null infinity is the Lens space \(L(n,1)\) and the NUT parameter \(l\), being the dual-Bondi-mass with respect to the infinitesimal translation induced by the Killing vector field \(-\frac{1}{2l}\partial_\psi\), is proportional to \(n\). Moreover, being a non-trivial \(S^1\) principal fibre bundle over \(S^2\) implies that the NUT parameter is non-zero.

\section{Outlook}
\label{sec:Outlook}
Taub-NUT is a very peculiar spacetime in many respects,
not only mathematically, but also concerning its 
possible physical interpretation. Yet it is frequently 
regarded for possible applications in astrophysics and 
cosmology, thereby suggesting that it may be taken 
as an adequate model for some astrophysical object.   Geodesic motions, shadows, and lensing in  
NUT-spacetime have been investigated in detail; see,
e.g.,  
\cite{Kagramanova.EtAl:2010,Jefremov.Perlick:2016,Halla.Perlick:2020}.
The question of whether and how NUT-spacetime could
be regarded as the exterior geometry produced by some 
star made of ordinary matter, like, e.g., a perfect fluid,
is an old one with partially controversial claims,
in particular regarding the physical interpretation 
of the NUT charge. 
So far no compelling physical insight seems to exists as 
to what known properties of ordinary matter could source 
a non-zero NUT charge. Perfect-fluid solutions with 
radially pointing vorticity fields have been constructed 
for that end, but the solutions established in 
\cite{Bradley.EtAl:1999} are singular, as has been
discussed in \cite{Rana:2019}.

In view of this mismatch between hypothetical physical 
applications eventually leading to measurements of the 
NUT parameter on one hand, and the lacking of a proper 
physical understanding of what might possibly be 
a matter source (if any) of it on the other, it seems a 
viable strategy to first characterise the solution as
uniquely as possible by its symmetry properties. 
This is what we attempted and achieved in this work. 
The physical problem proper clearly remains 
open for the time being. Also, the mathematical problem of classifying the
inequivalent maximal extensions of generalised
Taub-NUT should be addressed, which we plan to do in a future publication.  
\newpage
\bibliography{bibliography}
\bibliographystyle{plain}
\end{document}